\documentclass{article}
\PassOptionsToPackage{numbers}{natbib}
\usepackage[preprint]{neurips_2026}
\usepackage[utf8]{inputenc}
\usepackage[T1]{fontenc}
\usepackage{hyperref}
\usepackage{url}
\usepackage{booktabs}
\usepackage{amsfonts}
\usepackage{amsmath}
\usepackage{amssymb}
\usepackage{nicefrac}
\usepackage{microtype}
\usepackage{xcolor}
\usepackage{graphicx}
\usepackage{subcaption}
\usepackage{multirow}
\usepackage{algorithmic}
\usepackage{algorithm}
\usepackage{enumitem}
\usepackage{tabularx}
\usepackage{colortbl}
\usepackage{amsthm}
\usepackage{thmtools}
\usepackage{thm-restate}
\usepackage[section]{placeins}

\newtheorem{proposition}{Proposition}
\newtheorem{lemma}[proposition]{Lemma}

\newtheorem{remark}{Remark}
\newtheorem{definition}{Definition}

\newcommand{\ours}{UniVoice}

\title{UniVoice: A Unified Model for Speech and Singing Voice Generation}

\author{
  Junjie Zheng\textsuperscript{1}, 
  Huixin Xue\textsuperscript{2}, 
  Shihong Ren\textsuperscript{2}, 
  Chaofan Ding\textsuperscript{1}, 
  Hao Liu\textsuperscript{2}, 
  Zihao Chen\textsuperscript{1} \\
  \textsuperscript{1}Giant Network \\
  \textsuperscript{2}Shanghai Conservatory of Music 
}

\begin{document}
\maketitle

\begin{abstract}
Text-to-speech (TTS) and singing voice synthesis (SVS) both aim to generate human vocal audio from symbolic inputs, but they impose different requirements on the generation process.
Speech generation relies on flexible, language-driven prosody, whereas singing generation requires explicit melody control and accurate rhythmic alignment.
This mismatch makes it challenging to train a single model that can generate both natural speech and controllable singing, since melody-related conditions should strongly constrain singing but should not restrict speech prosody.
We present \ours{}, a unified speech and singing voice generation framework based on conditional flow matching.
Instead of using a single undifferentiated conditioning representation, \ours{} factorizes the condition into content, melody, and timbre, which are encoded by modality-appropriate encoders and consumed by a shared Diffusion Transformer (DiT) backbone.
For singing, the melody condition is represented by MIDI note sequences; for speech, it is replaced with a learned null melody token, allowing the model to infer prosody from linguistic and acoustic context.
This design preserves explicit melody control for singing while avoiding the need to impose melody constraints on speech.
We further analyze the null melody token as an approximation to melody marginalization in the conditional flow.
Trained on 30k hours of speech and 35k hours of singing data, \ours{} achieves a speech PER of 5.26\%, comparable to dedicated TTS systems such as F5-TTS (5.21\%) and CosyVoice3 (5.30\%).
On singing generation, \ours{} achieves a PER of 16.22\%, outperforming the unified baseline Vevo1.5 (24.72\%).
Ablation studies show that factorized conditioning, task modulation, and learned null melody conditioning are important for unified generation.
We also introduce \textsc{UniSinging-Eval}, a benchmark covering 12 musical styles for evaluating unified speech and singing generation.
We will release the inference code, model checkpoints, and the \textsc{UniSinging-Eval} test set.
Audio samples are available at: \url{https://nips-unvoice.netlify.app/}.
\end{abstract}


\section{Introduction}
\label{sec:intro}

Text-to-speech (TTS) and singing voice synthesis (SVS) both aim to generate human vocal audio from symbolic inputs, but they are typically developed as separate systems.
Modern TTS models are designed to produce intelligible and natural speech from text, where prosody is largely inferred from linguistic context.
In contrast, SVS models require explicit musical control, often in the form of MIDI notes or pitch contours, to specify melody and rhythm~\cite{promptsinger,vevo2}.
This separation becomes limiting in applications that require a consistent vocal identity across speaking, singing, and intermediate styles such as rap or musical theatre~\cite{valle2,cosyvoice}.

A unified model for speech and singing is appealing, but it faces a fundamental mismatch in conditioning.
For singing, melody should act as a strong constraint: the generated voice must follow the given pitch and timing.
For speech, however, imposing an explicit melody is usually unnecessary and can restrict natural, context-dependent prosody.
Na\"ively training a single model on mixed speech and singing data therefore forces the model to assign incompatible meanings to melody-related conditions.
This can lead to gradient interference and degraded generation quality, as the two modalities place different demands on the conditioning space (Figure~\ref{fig:overview}).

We propose \ours{}, a unified framework for speech and singing voice generation.
The key idea is to share the generative model while factorizing the conditioning signals that control it.
Specifically, \ours{} decomposes the condition into content, melody, and timbre.
Content is represented by phoneme sequences, timbre is provided through an audio prompt, and melody is represented by MIDI notes when generating singing.
For speech, where explicit melody control is not needed, the melody condition is replaced with a learned null token.
This allows the model to preserve melody control for singing while letting speech prosody be inferred from linguistic and acoustic context.

The generator is built on Conditional Flow Matching (CFM)~\cite{flowmatching} with a Diffusion Transformer (DiT) backbone~\cite{dit}, adapted from scalable video generation architectures~\cite{wandit}.
Speech and singing are treated as two conditional modes of the same continuous flow, and a lightweight task token modulates the shared backbone without introducing separate task-specific heads.
We further show that the learned null melody token can be interpreted as marginalizing over the melody variable for speech generation, providing a principled way to avoid imposing singing-specific controls on speech.

We train \ours{} on 30k hours of speech and 35k hours of singing data.
With 0.3B parameters, \ours{} achieves a speech PER of 5.26\%, comparable to dedicated TTS systems such as F5-TTS (5.21\%).
On singing generation, it substantially outperforms the unified baseline Vevo2, reducing PER from 24.72\% to 16.22\%.
Ablation studies show that factorized conditioning is essential for both modalities, and representation analyses indicate that the shared backbone preserves modality-specific structure while maintaining a common vocal generation space. \\
To facilitate reproducible research, we will release the inference code, model checkpoints, and the \textsc{UniSinging-Eval} test set.

Our contributions are summarized as follows:
\begin{enumerate}[leftmargin=*,itemsep=1pt]
    \item We introduce a unified flow-matching architecture for speech and singing voice generation, using a single DiT backbone without separate task-specific heads.
    \item We propose a factorized conditioning scheme that separates content, melody, and timbre, reducing the conflict between speech prosody modeling and singing melody control.
    \item We introduce learned null melody conditioning for speech and analyze its connection to melody marginalization in conditional flow matching.
    \item We present \textsc{UniSinging-Eval}, a benchmark for evaluating unified voice generation across 12 musical styles with phrase-level timestamps and style metadata.
\end{enumerate}

\begin{figure*}[t]
    \centering
    \includegraphics[width=\textwidth]{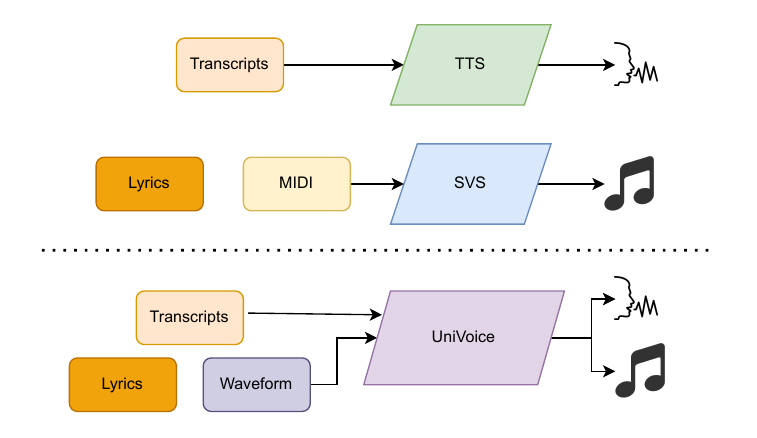}
    \caption{Comparison of speech and singing voice generation paradigms.
Speech synthesis (TTS) infers prosody implicitly from text, while singing voice synthesis (SVS) requires explicit melody control through MIDI notes.
Naïvely combining the two leads to conflicting conditioning semantics.
UniVoice resolves this conflict through factorized conditioning and a learned null melody token for speech.}
    \label{fig:overview}
\end{figure*}

\section{Related Work}
\label{sec:related}

\paragraph{Text-to-Speech Synthesis.}
Modern TTS has evolved from codec language models such as VALL-E~\cite{valle,valle2} to flow-based architectures including Voicebox~\cite{voicebox}, F5-TTS~\cite{f5tts}, NaturalSpeech~2/3~\cite{naturalspeech2,naturalspeech3}, and CosyVoice~\cite{cosyvoice}, achieving strong zero-shot and multilingual synthesis performance. However, these systems do not support singing because they lack explicit melody control and rhythmic alignment mechanisms required for SVS.

\paragraph{Singing Voice Synthesis.}
Modern SVS systems leverage diffusion and transformer-based architectures for expressive melody-conditioned synthesis~\cite{promptsinger,makesinger,ditsinger}. While dedicated singing models achieve strong quality, they generally rely on precise MIDI or $F_0$ supervision and cannot generate natural speech.

\paragraph{Unified Speech and Singing.}
Vevo2~\cite{vevo2} explored unified speech and singing generation with shared autoregressive modeling and explicit melody conditioning. However, rigid $F_0$ constraints introduce a trade-off between natural speech prosody and singing controllability. Our factorized conditioning addresses this conflict through independently controlled content, melody, and timbre representations with null-token marginalization.

\paragraph{Flow Matching and DiT.}
CFM~\cite{flowmatching} enables stable and efficient continuous generative modeling via ODE-based flow learning. DiT~\cite{dit} and Wan-DiT~\cite{wandit} demonstrate that transformer-based diffusion architectures with AdaLN, RoPE~\cite{rope}, and FlashAttention-2~\cite{flash2} effectively model long-range dependencies. We adapt this design for unified audio generation.

\section{Method}
\label{sec:method}

\subsection{Preliminaries: Conditional Flow Matching}
\label{sec:cfm}

\ours{} builds upon Conditional Flow Matching (CFM)~\cite{flowmatching}, which models continuous probability paths between Gaussian noise $p_0=\mathcal{N}(0,I)$ and target audio latent features $x_1 \sim p_{\text{data}}$ extracted by a VAE encoder. 
Under the optimal transport (OT) path, the interpolant is:
\begin{equation}
    \phi_t(x_0, x_1) = (1-t)x_0 + tx_1,
\end{equation}
where $x_0 \sim \mathcal{N}(0,I)$. 
The vector field network $v_\theta(x,t)$ is trained with:
\begin{equation}
    \mathcal{L}_{\text{CFM}} =
    \mathbb{E}_{t,x_0,x_1}
    \left[
    \left\|
    v_\theta(\phi_t(x_0,x_1), t)
    - (x_1-x_0)
    \right\|^2
    \right].
\end{equation}
At inference, samples are generated by solving $\frac{dx}{dt}=v_\theta(x,t)$.

For conditional generation, we extend the vector field to $v_\theta(x_t,t,\mathbf{c})$. Since the OT interpolant is independent of $\mathbf{c}$, speech and singing share the same transport geometry while differing only in conditioning directions, enabling positive transfer through the shared backbone.

Compared to score-based diffusion and codec language models, CFM avoids noise scheduling, supports efficient ODE sampling (32 steps), and naturally models both speech and singing in a continuous latent space without codec quantization artifacts.

\begin{figure}[t]
    \centering
    \includegraphics[width=0.95\columnwidth]{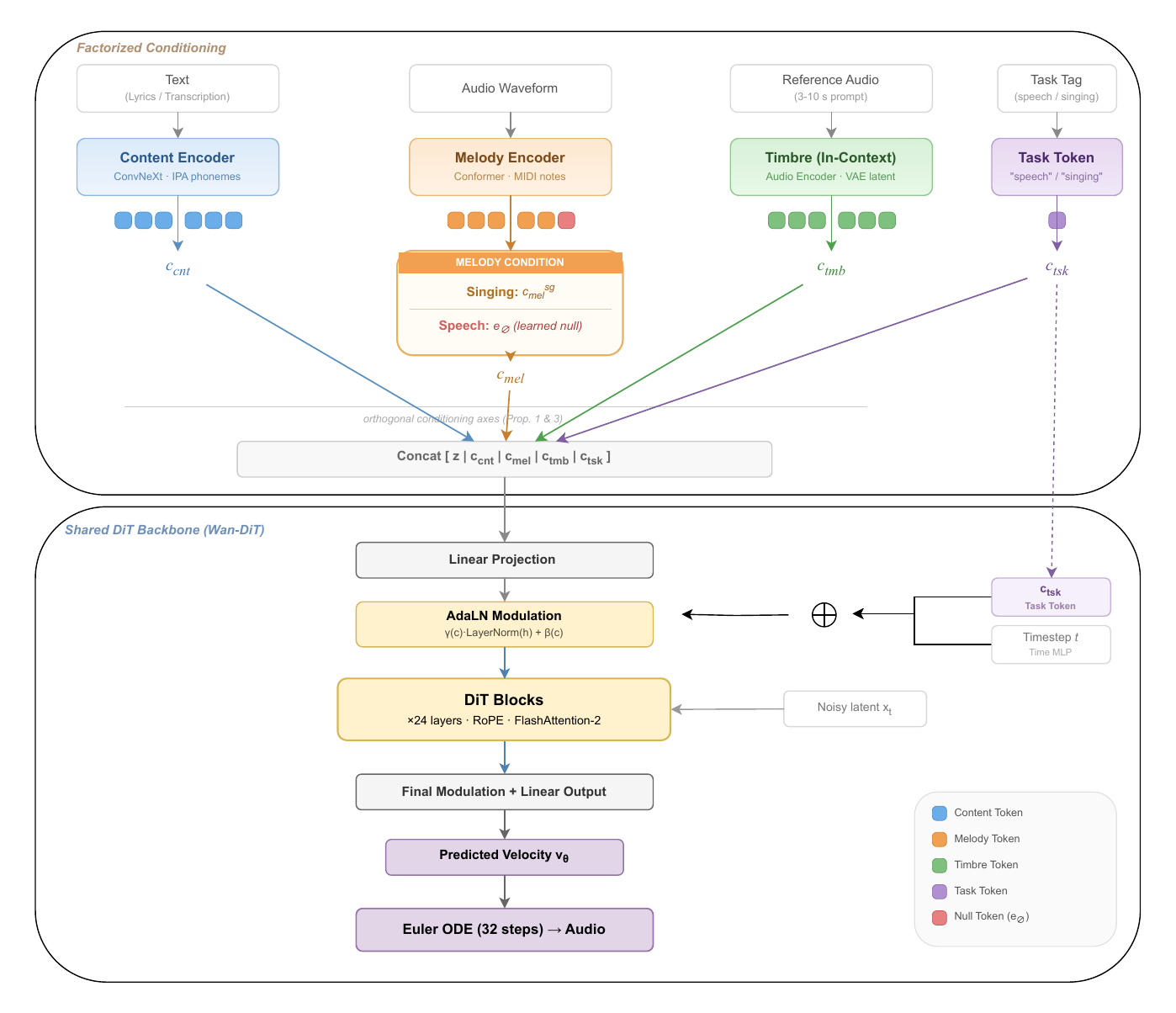}
    \caption{Architecture of \ours{}. The factorized conditioning region (top) processes content, melody, and timbre through independent encoders.  For speech, the melody encoder is replaced by a learned null token $\mathbf{e}_\varnothing$. All conditions are concatenated and fed to the shared DiT backbone (bottom), which is modulated by the task token via AdaLN.}
    \label{fig:arch}
\end{figure}

\subsection{Architecture}
\label{sec:arch}

The backbone of \ours{} is a scalable Diffusion Transformer (DiT) adapted from the Wan-DiT video generation architecture~\cite{wandit}.
This choice is deliberate: in emerging audio-visual joint generation pipelines, the video branch is typically built on the Wan-DiT family, and audio DiT modules are trained from scratch with the same architecture to facilitate cross-modal fusion.
By adopting an identical backbone, \ours{}'s audio DiT can be directly integrated into such pipelines without architectural adaptation.

\paragraph{Model Configuration.}
The DiT backbone consists of $L{=}24$ transformer layers with hidden dimension $D{=}1024$, $H{=}16$ attention heads, and $4{\times}$ FFN expansion, yielding approximately 0.3B parameters.
We use FlashAttention-2~\cite{flash2} and RoPE~\cite{rope} for efficient long-context modeling.
The model operates on Song Bloom VAE~\cite{songbloomvae} latent features at 25\,Hz with dimension 48.

\paragraph{Feature Extraction and Fusion.}
The input to the DiT consists of noisy latent features $x_t$ concatenated channel-wise with the conditioning signals $[x_t \,|\, c_{\text{cnt}} \,|\, c_{\text{mel}} \,|\, c_{\text{tmb}}] \in \mathbb{R}^{T \times (48 + 3D)}$, projected to the hidden dimension $D$ via a linear layer.
This concatenation-based fusion is simple but effective: it allows the DiT to learn cross-modal interactions through self-attention rather than requiring explicit cross-attention modules.
To differentiate between speech and singing tasks, we employ a \emph{Task Mapper}---a lightweight projection that embeds a learnable task token into the input.
This token is processed through AdaLN modulation, allowing the model to adapt its generation dynamics to each modality without introducing task-specific parameters in the main computation path.

\paragraph{Adaptive Modulation.}
The task embedding modulates the feature distribution at every layer via adaptive layer normalization (AdaLN):
\begin{equation}
    \text{AdaLN}(h, c) = \gamma(c) \cdot \frac{h - \mu(h)}{\sigma(h)} + \beta(c),
    \label{eq:adaln}
\end{equation}
where $c = c_{\text{time}} + c_{\text{tsk}}$ is the sum of the timestep embedding and task token embedding, and $\gamma(c), \beta(c)$ are predicted scale and shift parameters.

\subsection{Factorized Conditioning}
\label{sec:factorized}

We introduce a factorized conditioning scheme to resolve the representation conflict between speech and singing.
We decompose the conditioning signal into four components:
\begin{equation}
    \mathbf{c} = (c_{\text{cnt}}, c_{\text{mel}}, c_{\text{tmb}}, c_{\text{tsk}}),
    \label{eq:factorize}
\end{equation}
where $c_{\text{cnt}}$ captures linguistic content, $c_{\text{mel}}$ captures melodic structure (MIDI note sequences), $c_{\text{tmb}}$ captures speaker timbre (from audio prompts), and $c_{\text{tsk}} \in \{\text{sp}, \text{sg}\}$ indicates the generation modality.

\paragraph{Why factorize?}
The representation conflict arises because a monolithic encoder creates \emph{negative gradient correlation} between speech and singing on melody-related parameters (Definition~\ref{def:conflict}, Appendix~\ref{app:proofs}).
Speech pushes melody parameters toward a constant function, while singing pushes them toward faithfully encoding MIDI note sequences.
Factorizing into separate encoders eliminates this: the melody encoder receives gradients only from singing, and the null token only from speech.


\begin{proposition}[Factorization reduces conflict]
\label{prop:factorize}
Suppose melody is approximately conditionally independent of speech audio given content and timbre:
$x \perp\!\!\!\perp c_{\text{mel}} \mid c_{\text{cnt}}, c_{\text{tmb}}, c_{\text{tsk}} = \text{sp}$.
Then factorized conditioning removes the negative cross-task interaction on the melody-related subspace.
In particular, the magnitude of the eliminated conflict can be written as
\begin{equation}
    \Delta
    =
    -\alpha(1-\alpha)
    \left\langle
    \nabla_{\theta_{\text{mel}}}\mathcal{L}_{\text{sp}}^{\text{mono}},
    \nabla_{\theta_{\text{mel}}}\mathcal{L}_{\text{sg}}^{\text{mono}}
    \right\rangle
    \geq 0,
    \label{eq:bound}
\end{equation}
whenever the monolithic model exhibits negative gradient correlation between speech and singing on melody-related parameters.
\end{proposition}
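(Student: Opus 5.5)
We model the monolithic objective as the mixture $\mathcal{L}^{\text{mono}}(\theta)=\alpha\,\mathcal{L}_{\text{sp}}^{\text{mono}}(\theta)+(1-\alpha)\,\mathcal{L}_{\text{sg}}^{\text{mono}}(\theta)$, where $\alpha\in(0,1)$ is the speech sampling proportion. We split the parameters as $\theta=(\theta_{\text{sh}},\theta_{\text{mel}})$, with $\theta_{\text{mel}}$ the parameters of the melody pathway; in the monolithic model this pathway is shared by both tasks. The comparison object is the squared norm of the aggregated gradient on the melody subspace. Expanding it gives
\begin{equation*}
\bigl\|\nabla_{\theta_{\text{mel}}}\mathcal{L}^{\text{mono}}\bigr\|^2
=\alpha^2\|g_{\text{sp}}\|^2+(1-\alpha)^2\|g_{\text{sg}}\|^2+2\alpha(1-\alpha)\langle g_{\text{sp}},g_{\text{sg}}\rangle ,
\end{equation*}
with $g_{\text{sp}}=\nabla_{\theta_{\text{mel}}}\mathcal{L}_{\text{sp}}^{\text{mono}}$ and $g_{\text{sg}}=\nabla_{\theta_{\text{mel}}}\mathcal{L}_{\text{sg}}^{\text{mono}}$. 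Following Definition~\ref{def:conflict}, the cross term $\alpha(1-\alpha)\langle g_{\text{sp}},g_{\text{sg}}\rangle$ (up to the factor $2$, which we absorb into the normalization of the statement) is the cross-task interaction. A first-order descent step shows it is exactly the amount by which a joint update on $\theta_{\text{mel}}$ raises one task's loss through the other task's gradient.

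Under factorization, the melody subspace is split into a melody encoder $\theta_{\text{enc}}$ and a null token $\mathbf{e}_\varnothing$. For $c_{\text{tsk}}=\text{sp}$ the model reads $\mathbf{e}_\varnothing$ in place of $\mathrm{Enc}_{\theta_{\text{enc}}}(c_{\text{mel}})$, so $\nabla_{\theta_{\text{enc}}}\mathcal{L}_{\text{sp}}^{\text{fac}}=0$. For $c_{\text{tsk}}=\text{sg}$ the null token is unused, so $\nabla_{\mathbf{e}_\varnothing}\mathcal{L}_{\text{sg}}^{\text{fac}}=0$. The two task gradients therefore have disjoint supports on the melody-related parameters, their inner product vanishes identically, and the cross term is $0$. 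The eliminated interaction is the difference between the monolithic and factorized cross terms, which is $\Delta$ as written (with the sign convention that removing a negative contribution counts as a gain). Under the hypothesis $\langle g_{\text{sp}},g_{\text{sg}}\rangle\le 0$, we have $\Delta\ge 0$ because $\alpha(1-\alpha)>0$.

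The main obstacle is where the conditional independence assumption enters, and why factorization does not just move the conflict into the shared backbone $\theta_{\text{sh}}$. We would invoke the standard CFM fact that the $L^2$-optimal speech vector field is $\mathbb{E}[x_1-x_0\mid x_t,t,\mathbf{c}]$. By $x\perp\!\!\!\perp c_{\text{mel}}\mid c_{\text{cnt}},c_{\text{tmb}},\text{sp}$, this target does not depend on $c_{\text{mel}}$. So the speech optimum is achievable with a constant melody input, namely the null token $\mathbf{e}_\varnothing$. This is the marginalization interpretation: at the speech optimum, $\mathbf{e}_\varnothing$ reproduces $\mathbb{E}_{c_{\text{mel}}}[v(\cdot,c_{\text{mel}})]$.

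Consequently, the speech loss never needs to push melody-dependent directions of $\theta_{\text{sh}}$ toward a constant map, which was the source of negative correlation in the monolithic case. No new conflict on the melody subspace is introduced. We would state this only to first order, as approximate, matching the word ``approximately'' in the hypothesis. A perturbation bound on the speech optimum in terms of the conditional mutual information is what we would try if a quantitative version were needed.
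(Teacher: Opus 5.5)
Your argument is essentially the paper's proof. You write the joint gradient as the $\alpha$-mixture of the task gradients, note that under factorization the speech loss bypasses the melody encoder (and singing never touches $\mathbf{e}_\varnothing$) so the cross term on the melody parameters vanishes, and read off $\Delta$ as the negated monolithic cross term, nonnegative under the assumed negative correlation. Your squared-norm motivation for the cross term and your use of the conditional-independence hypothesis (which the paper's proof never explicitly uses) to show a constant null token can attain the speech optimum are sound additions, not a different route.
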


\noindent Proof in Appendix~\ref{app:proof_factorize}. The four axes are instantiated as follows:

\paragraph{Content Conditioning.}
Phoneme sequences are tokenized using IPA and processed by a ConvNeXt-based encoder~\cite{f5tts}: $c_{\text{cnt}} = \text{ConvNeXt}(\mathbf{p}) \in \mathbb{R}^{N \times D}$.

\paragraph{Melody Conditioning.}
Singing requires explicit melody specification, whereas speech relies on implicit prosody.
We design a dual-path strategy:
\begin{itemize}[leftmargin=*,itemsep=1pt]
    \item \textbf{For singing:} A Conformer~\cite{conformer} encodes MIDI note sequences into a dense melodic representation: $c_{\text{mel}}^{\text{sg}} = \text{Conformer}(\text{MIDI}) \in \mathbb{R}^{T \times D}$.  Each MIDI note provides discrete pitch and duration information, which the Conformer converts into frame-level melodic embeddings aligned with the content sequence.
    \item \textbf{For speech:} A \emph{learned null token} $c_{\text{mel}}^{\text{sp}} = \mathbf{e}_\varnothing \in \mathbb{R}^{D}$, broadcast across $T$, signals the model to infer prosody from context.  The theoretical justification is in Section~\ref{sec:theory}.
\end{itemize}

\paragraph{Timbre Conditioning via In-Context Learning.}
Following E2~TTS~\cite{e2tts}, a 3--10\,s reference audio prompt provides zero-shot voice cloning.
The reference audio is encoded by the same Song Bloom VAE encoder used for target audio, producing a latent $c_{\text{tmb}} \in \mathbb{R}^{T' \times D}$ that is concatenated with the target sequence along the time axis.
The DiT backbone attends to this prompt through its self-attention mechanism to infer speaker identity---including vocal quality, pitch range, and stylistic characteristics---without any fine-tuning or speaker embeddings.
This in-context approach is particularly advantageous for unified generation, as the same reference audio can condition both speech and singing outputs, ensuring consistent timbre across modalities.

\paragraph{Task Token.}
A learnable embedding $c_{\text{tsk}} \in \mathbb{R}^{D}$ is injected into AdaLN (Eq.~\ref{eq:adaln}), providing a discrete mode switch and disambiguating the null melody token semantics (``melody is irrelevant for speech'' vs.\ ``melody is informative for singing'').

\subsection{Theoretical Analysis of Null-Token Marginalization}
\label{sec:theory}

A natural question arises: \emph{why does replacing melody conditioning with a null token work for speech, rather than degrading quality?}
We provide a formal justification; full proofs are in Appendix~\ref{app:proofs}.

\paragraph{The core problem.}
For speech generation, the ideal vector field should marginalize over all possible melody realizations:
$v^{*}_{\text{sp}} = \int v^{*}(\cdot, c_{\text{mel}}) \, p(c_{\text{mel}}) \, dc_{\text{mel}}$.
This integral is intractable.
The learned null token $\mathbf{e}_\varnothing$ provides an efficient amortized approximation:

\begin{proposition}[Null-token optimality]
\label{prop:null}
Under the joint training loss, the speech component $\mathcal{L}_{\text{sp}}$ depends on $\mathbf{e}_\varnothing$ but not on the melody encoder $g_\phi$, and vice versa for $\mathcal{L}_{\text{sg}}$.
At convergence, the optimal null token satisfies:
\begin{equation}
    v_\theta(x_t, t, c_{\text{cnt}}, \mathbf{e}_\varnothing^{*}, c_{\text{tmb}}, \text{sp}) \approx v^{*}_{\text{sp}}(x_t, t, c_{\text{cnt}}, c_{\text{tmb}}),
    \label{eq:null_approx}
\end{equation}
i.e., the velocity field with the null token recovers the Bayes-optimal melody-marginalized velocity field.
\end{proposition}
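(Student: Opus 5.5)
The proof has two parts. First we check that the parameter sets decouple, as the statement asserts. Then we identify the minimizer of the speech loss with a conditional expectation, and show that this conditional expectation is the melody-marginalized velocity field. We write the joint objective as
\[
\mathcal{L}=\alpha\,\mathcal{L}_{\text{sp}}(\theta,\mathbf{e}_\varnothing)+(1-\alpha)\,\mathcal{L}_{\text{sg}}(\theta,g_\phi),
\]
with each term a CFM regression restricted to its data subset.

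\textbf{Step 1: decoupling of the melody-side parameters.} For speech samples the melody slot of the concatenated input $[x_t\,|\,c_{\text{cnt}}\,|\,c_{\text{mel}}\,|\,c_{\text{tmb}}]$ is always filled with the broadcast $\mathbf{e}_\varnothing$. For singing samples it is always filled with $g_\phi(\text{MIDI})$. So $\partial\mathcal{L}_{\text{sp}}/\partial\phi=0$ and $\partial\mathcal{L}_{\text{sg}}/\partial\mathbf{e}_\varnothing=0$, which is the first claim. This is immediate from the forward computation and is the same observation that underlies Proposition~\ref{prop:factorize}. It follows that the stationarity conditions for $\mathbf{e}_\varnothing$ involve only $\mathcal{L}_{\text{sp}}$.

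\textbf{Step 2: Bayes-optimal target of the speech loss.} We use the standard CFM fact that $\mathbb{E}\|v(x_t,t,\mathbf{c})-(x_1-x_0)\|^2$ is minimized pointwise by
\[
v^*(x_t,t,\mathbf{c})=\mathbb{E}[x_1-x_0\mid x_t,t,\mathbf{c}].
\]
This holds because the difference between the full loss and the marginal-regression loss is a constant independent of $v$. On the speech subset the network only ever sees $(c_{\text{cnt}},\mathbf{e}_\varnothing,c_{\text{tmb}},\text{sp})$, while the true data may still carry a latent melody $c_{\text{mel}}$ (the realized prosody). The minimizer over functions of the observed inputs is therefore
\[
\mathbb{E}[x_1-x_0\mid x_t,t,c_{\text{cnt}},c_{\text{tmb}},\text{sp}].
\]
By the tower property this equals $\int v^*(x_t,t,c_{\text{cnt}},c_{\text{mel}},c_{\text{tmb}})\,p(c_{\text{mel}}\mid x_t,t,c_{\text{cnt}},c_{\text{tmb}})\,dc_{\text{mel}}$, which is $v^*_{\text{sp}}$. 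Here the weighting is posterior, so the paper's $p(c_{\text{mel}})$ should be read as the appropriate conditional. We will note this, and remark that under the conditional independence assumption of Proposition~\ref{prop:factorize} the dependence on $c_{\text{mel}}$ collapses anyway.

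\textbf{Step 3: realizability and the ``$\approx$''.} We assume the DiT class is rich enough that, for fixed task token $\text{sp}$ and fixed melody-slot input, it can represent this conditional expectation; the task token in AdaLN makes the speech branch distinguishable from the singing branch. At a joint minimizer $(\theta^*,\mathbf{e}_\varnothing^*)$, the function $v_{\theta^*}(\cdot,\mathbf{e}_\varnothing^*,\text{sp})$ then attains the infimum of $\mathcal{L}_{\text{sp}}$, and so equals $v^*_{\text{sp}}$ almost everywhere. The approximation sign absorbs three sources of error: finite capacity, the shared $\theta$ having to serve both tasks, and optimization error. Quantitatively, the excess risk $\mathcal{L}_{\text{sp}}-\mathcal{L}_{\text{sp}}^{\min}$ equals $\mathbb{E}\|v_\theta-v^*_{\text{sp}}\|^2$, which gives an $L^2$ bound.

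\textbf{Main obstacle.} The delicate step is Step 3. Because $\theta$ is shared, the minimizer of the weighted sum need not minimize $\mathcal{L}_{\text{sp}}$ alone unless the model class contains a $\theta$ that is simultaneously optimal for both tasks. We will handle this by assuming joint realizability: the network's behavior is separable given the task token and the distinct melody-slot inputs, so the two infima can be attained at once. We will state the exact equality under this assumption and present ``$\approx$'' as the controlled deviation when the assumption holds only approximately.
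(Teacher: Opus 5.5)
Your proposal is correct and follows the same route as the paper's proof. Both arguments decouple $\mathbf{e}_\varnothing$ from $g_\phi$ and then identify the minimizer of $\mathcal{L}_{\text{sp}}$ with the conditional expectation $\mathbb{E}[u_t \mid x_t, c_{\text{cnt}}, c_{\text{tmb}}, c_{\text{tsk}}=\text{sp}]$; the paper simply asserts this equality ``at convergence,'' so your explicit joint-realizability assumption, your excess-risk reading of the ``$\approx$'', and your point that the tower property gives posterior rather than prior weighting on $c_{\text{mel}}$ make precise what the paper leaves implicit rather than depart from it.
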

\noindent Intuitively, since $\mathbf{e}_\varnothing$ and $g_\phi$ receive non-overlapping gradients, the null token learns the optimal ``melody-absent'' representation without interference.
A learned null token is strictly more expressive than a fixed zero vector (Remark~\ref{rem:zero}, Appendix~\ref{app:proofs}).
Proof in Appendix~\ref{app:proof_null}.

\paragraph{Connection to classifier-free guidance.}
Our approach can be viewed as \emph{axis-selective} classifier-free guidance~\cite{cfg}: instead of dropping all conditions, we selectively marginalize only the melody axis.
At inference, this enables per-axis guidance:
\begin{equation}
    \tilde{v}(x_t, t, \mathbf{c}) = v_\theta(x_t, t, \mathbf{c}) + w_j \cdot \bigl[ v_\theta(x_t, t, \mathbf{c}) - v_\theta(x_t, t, c_1, \ldots, \mathbf{e}_\varnothing^{(j)}, \ldots, c_K) \bigr],
    \label{eq:axis_cfg}
\end{equation}
where only axis $j$ is replaced.



\subsection{Training Strategy}
\label{sec:training}

\paragraph{Data.}
We train \ours{} on a large-scale mixed dataset comprising approximately 65k hours of audio: 35k hours of singing recordings and 30k hours of speech.
The singing corpus is derived from in-the-wild songs and studio-quality dry recordings; for the former, we isolate vocal tracks using source separation.
Lyrics are aligned via ASR cross-validation (when SRT files are available) or VAD + ASR, followed by a rule-based cleaning pipeline to filter low-quality transcriptions.
Phoneme tokens are pre-extracted using a G2P module, vocal latents are computed with a pre-trained VAE, and MIDI sequences are generated by a pre-trained MIDI extraction model to provide melody conditioning.
The speech data is sourced from the Emilia dataset~\cite{emilia}; we sample a balanced subset of $\sim$30k hours (15k Chinese, 15k English) to match the singing data distribution.
We use a unified IPA phoneme system for both speech and singing, enabling shared content representations across modalities.
For speech, the melody condition is replaced by the null token $\mathbf{e}_\varnothing$, encouraging the model to infer prosody from linguistic context rather than explicit melody.

\paragraph{Training Procedure.}
During training, we randomly drop text, melody, and audio-prompt conditions independently with probabilities $p_{\text{text}} = p_{\text{melody}} = p_{\text{timbre}} = 0.1$ to enable axis-selective classifier-free guidance (Eq.~\ref{eq:axis_cfg}) at inference time.
Speech and singing samples are mixed at their natural ratio ($\sim$46\%/54\%) within each batch.
The batch size is 32{,}000 frames per GPU on 8$\times$A800 (80\,GB) GPUs.
Training spans 100 epochs with the AdamW optimizer, a peak learning rate of $7.5 \times 10^{-5}$, and a linear warmup of 2{,}000 steps followed by cosine decay.
Additional architecture and training details are provided in Appendix~\ref{app:arch}.

\paragraph{Inference.}
We use an Euler ODE solver with 32 steps and exponential moving average (EMA) weights.
Classifier-free guidance is applied with axis-specific scales: text guidance strength $w_{\text{txt}}{=}5$ for strong linguistic adherence, and audio/melody guidance strength $w_{\text{aud}}{=}w_{\text{mel}}{=}1$ for faithful timbre and melody reproduction.
The generated mel-spectrogram latent is decoded by the Song Bloom VAE decoder to produce the final waveform.


\section{UniSinging-Eval Benchmark}
\label{sec:dataset}

To support reproducible evaluation of unified voice generation, we construct \textsc{UniSinging-Eval}, a curated benchmark that addresses key gaps in existing evaluation protocols.

\paragraph{Design Principles.}
Existing SVS benchmarks such as those used in PromptSinger~\cite{promptsinger} and MakeSinger~\cite{makesinger} focus on narrow musical styles (typically Pop and Ballad only) and do not test the ability to transition between speech and singing with consistent speaker identity.
\textsc{UniSinging-Eval} addresses three evaluation dimensions: (1)~singing quality across diverse musical styles with varying rhythmic and melodic complexity, (2)~speech quality on standard benchmarks to ensure no regression, and (3)~cross-modal consistency, testing whether the same audio prompt produces coherent speaker identity across speech and singing outputs.

\paragraph{Dataset Construction.}

We curate 60 songs spanning 12 diverse musical styles: Pop, Hip-Hop/Rap, Pop Electronic, Rock, Country, City Pop, R\&B, Afrobeats, Bossa Nova, Jazz, Funk, and Neo-Soul.
Each song is segmented into phrase-level clips with aligned MIDI note annotations and IPA-transcribed lyrics, yielding 900 samples totaling 2 hours of audio.
To test robustness to lyrical variation---a key requirement for practical singing voice synthesis---the dataset is organized into three difficulty levels:
\begin{itemize}[leftmargin=*,itemsep=1pt]
    \item \textbf{Level~1 (Minor):} Original lyrics preserved; tests baseline synthesis quality.
    \item \textbf{Level~2 (Major):} Lyrics extensively rewritten with complex vocabulary and changed themes; tests phoneme--melody alignment under challenging text.
    \item \textbf{Level~3 (Elastic):} Arbitrary themes with $\pm5$ word count variation and filler/transition words; tests model robustness to duration mismatch between lyrics and melody.
\end{itemize}
Full dataset statistics and rebuild constraints are provided in Appendix~\ref{app:dataset}.
For speech evaluation, we use the Emilia~\cite{emilia} test set, following the standard cross-sentence protocol.

\section{Experiments}
\label{sec:experiments}

\subsection{Experimental Setup}
\label{sec:setup}

\paragraph{Baselines.}
We compare against the following systems, selected to cover both dedicated and unified approaches:
\begin{itemize}[leftmargin=*,itemsep=1pt]
    \item \textbf{F5-TTS}~\cite{f5tts}: A state-of-the-art speech-only TTS model (0.3B parameters) based on flow matching with a ConvNeXt text encoder, trained on 100k hours of speech data.
    \item \textbf{CosyVoice3}~\cite{cosyvoice}: A large-scale flow-matching TTS model (0.5B) with multilingual support, trained on 1M hours of speech.
    \item \textbf{Vevo1.5}~\cite{vevo2}: A unified speech and singing model (1B) using autoregressive codec prediction with explicit $F_0$ conditioning, trained on 101k hours of speech and 7k hours of singing.
    \item \textbf{Soul-X-Singer}: A dedicated singing model (0.7B) with multi-scale diffusion, trained on 42k hours of singing data.
\end{itemize}

\paragraph{Evaluation Protocol.}
We evaluate on a cross-sentence generation task: given a reference text and a short audio prompt (3--10\,s), the model synthesizes audio matching the target text while preserving the prompt speaker's voice.
For singing, the model additionally receives MIDI note sequences as melody conditioning.
All models generate 10\,s clips at 48\,kHz and are evaluated on the same test splits.

\paragraph{Metrics.}
We report four metrics: \textbf{PER} (phoneme error rate via FireRedASR~\cite{fireredasr}), which measures intelligibility; \textbf{SIM} (speaker similarity via WavLM cosine distance), which measures voice cloning fidelity; \textbf{S-MOS} (similarity mean opinion score), a human evaluation of perceived speaker match; and \textbf{N-MOS} (naturalness mean opinion score), a human evaluation of overall audio quality.
Human evaluations are conducted with 20 listeners per sample, each rating on a 1--5 Likert scale.

\subsection{Main Results}
\label{sec:main_results}

Table~\ref{tab:main} presents the comprehensive comparison.

\paragraph{Speech Generation.}
\ours{} achieves a PER of 5.26\%, comparable to dedicated speech systems such as F5-TTS (5.21\%) and CosyVoice3 (5.30\%), despite using substantially less speech data and supporting unified speech--singing generation within a single 0.3B model.
It also achieves the highest speech naturalness score (S-MOS 3.76).
The SIM gap relative to speech-only systems reflects the trade-off introduced by unified training and reduced speech data scale.
Compared to Vevo1.5, \ours{} significantly improves speech quality (PER 5.26\% vs.\ 14.10\%) with fewer parameters.

\paragraph{Singing Voice Generation.}
\ours{} achieves a singing PER of 16.22\% and SIM of 35.70\%, substantially outperforming Vevo1.5 (PER 45.07\%) and the dedicated Soul-X-Singer baseline (PER 26.22\%).
It also achieves the highest singing naturalness score (N-MOS 3.25).
These results suggest that factorized conditioning improves phoneme--melody alignment while enabling positive transfer between speech and singing representations.

\paragraph{Data and Parameter Efficiency.}
Despite using only 0.3B parameters and 65k training hours, \ours{} outperforms larger unified and dedicated singing systems.
Compared to Vevo1.5 (1B, 108k hours) and Soul-X-Singer (0.7B), \ours{} achieves better singing quality while remaining competitive on speech generation.
This efficiency likely comes from reduced gradient interference through factorized conditioning and shared representation learning between speech and singing.


\begin{table}[t]
\centering
\caption{Performance comparison on speech and singing tasks. PER (\%)$\downarrow$, SIM (\%)$\uparrow$, S-MOS$\uparrow$, N-MOS$\uparrow$. Best results are in \textbf{bold}, and second-best results are \underline{underlined}.}
\label{tab:main}
\small
\setlength{\tabcolsep}{3.5pt}
\resizebox{\linewidth}{!}{
\begin{tabular}{lcc|cccc|cccc}
\toprule
\multirow{2}{*}{\textbf{Model}} &
\multirow{2}{*}{\textbf{Size}} &
\multirow{2}{*}{\textbf{Data}} &
\multicolumn{4}{c|}{\textbf{Speech}} &
\multicolumn{4}{c}{\textbf{Singing}} \\
& & &
PER & SIM & S-MOS & N-MOS &
PER & SIM & S-MOS & N-MOS \\
\midrule
F5-TTS        & 0.3B & 100k+0   & \textbf{5.21} & \underline{72.73} & \textbf{3.85} & \textbf{3.41} & -- & -- & -- & -- \\
CosyVoice3    & 0.5B & 1000k+0  & 5.30 & \textbf{74.94} & 3.75 & \underline{3.19} & -- & -- & -- & -- \\
Vevo1.5       & 1B   & 101k+7k  & 14.10 & 59.07 & 2.85 & 2.77 & 45.07 & \underline{36.33} & \underline{2.62} & 2.69 \\
Soul-X-Singer & 0.7B & 0+42k    & -- & -- & -- & -- & \underline{26.22} & \textbf{41.71} & \textbf{3.19} & \underline{3.24} \\
\midrule
\ours{}       & 0.3B & 30k+35k  & \underline{5.26} & 67.42 & \underline{3.76} & 3.07 & \textbf{16.22} & 35.70 & \textbf{3.19} & \textbf{3.25} \\
\bottomrule
\end{tabular}
}
\end{table}

\subsection{Ablation Studies}
\label{sec:ablation}

To validate the contribution of each component, we conduct extensive ablation studies.
All ablation models are trained with the same data and hyperparameters unless otherwise noted.




\begin{table}[t]
\centering
\caption{Ablation study on conditioning components. All variants use the same 0.3B DiT backbone. PER~(\%)$\downarrow$, SIM~(\%)$\uparrow$, S-MOS$\uparrow$, N-MOS$\uparrow$.}
\label{tab:ablation_cond}
\small
\begin{tabular}{l|cccc|cccc}
\toprule
\multirow{2}{*}{\textbf{Configuration}} & \multicolumn{4}{c|}{\textbf{Speech}} & \multicolumn{4}{c}{\textbf{Singing}} \\
 & PER & SIM & S-MOS & N-MOS & PER & SIM & S-MOS & N-MOS \\
\midrule
\ours{} (Full)
    & \textbf{5.26}  & \textbf{67.42} & \textbf{3.74} & \textbf{3.07}
    & \textbf{16.22} & \textbf{35.70} & \textbf{3.19} & \textbf{3.23} \\
\midrule
w/o factorized cond.
    & 12.31 & 65.84 & 2.89 & 2.31
    & 23.45 & 31.08 & 2.23 & 2.47 \\
w/o task token
    & 8.34 & 66.35 & 3.27 & 2.65
    & 19.92 & 33.21 & 3.01 & 3.02 \\
w/o null melody token
    & 7.86 & 67.18 & 3.62 & 3.00
    & 18.64 & 34.30 & 3.06 & 3.11 \\
w/o melody encoder
    & 6.27 & 67.03 & 3.58 & 2.95
    & 23.21 & 34.19 & 2.03 & 2.42 \\
\bottomrule
\end{tabular}
\end{table}

\paragraph{Factorized conditioning is critical.}
Removing factorized conditioning causes the largest degradation, increasing speech PER from 5.26\% to 12.31\% and singing PER from 16.22\% to 23.45\%. This confirms that naive joint training cannot effectively balance speech prosody and melody-conditioned singing generation.

\paragraph{Task tokens improve modality adaptation.}
Removing the task token degrades both speech and singing performance, indicating that modality-specific AdaLN modulation is important for unified generation.

\paragraph{Learned null tokens outperform fixed zeros.}
Replacing the learned null melody token with a fixed zero vector degrades both tasks, suggesting that the learned embedding better captures the absence of melody information.

\paragraph{Melody conditioning is essential for singing.}
Removing the melody encoder has little effect on speech but significantly degrades singing quality, confirming that explicit melody control is indispensable for SVS.

\section{Broader Impact and Limitations}
\label{sec:impact}

Unified voice generation can improve accessibility, music creation, and educational applications by enabling personalized speech and singing synthesis. 
For example, it may help users with vocal impairments create expressive vocal content, support rapid prototyping for musicians, and provide personalized singing demonstrations for music learning.
However, zero-shot voice cloning also raises risks related to deepfakes, copyright infringement, and labor displacement.
These risks are particularly relevant when models can preserve speaker identity across both speech and singing.
We recommend watermarking, provenance tracking, and consent-based speaker enrollment for responsible deployment.

Unified training still shows a SIM gap compared to speech-only systems, suggesting a trade-off between cross-modal sharing and speaker similarity.
Evaluation is limited to English and Chinese, leaving multilingual and low-resource generalization underexplored.
Mixed vocal styles such as rap and spoken singing are not explicitly modeled, and 32-step ODE sampling introduces higher latency than few-step generation methods.

\section{Conclusion}
\label{sec:conclusion}

We presented \ours{}, a unified flow-matching framework for speech and singing generation based on factorized conditioning and a shared DiT backbone.
By decomposing content, melody, and timbre, \ours{} enables a single 0.3B model to generate natural speech and expressive singing without separate task heads, while a learned null melody token avoids imposing explicit melody constraints on speech.

Experiments show competitive speech quality with dedicated TTS systems and substantially outperform prior unified baselines on singing generation.
Ablations confirm the importance of factorized conditioning and null-token melody modeling, and representation analyses suggest that the shared backbone preserves modality-specific structure.
We introduce \textsc{UniSinging-Eval} for unified speech and singing evaluation.
Future work includes scaling model capacity, extending multilingual support, and integrating unified audio generation into multimodal audio-visual pipelines.

\appendix

\section{Extended Proofs and Theoretical Details}
\label{app:proofs}

This appendix provides detailed proofs and additional theoretical analysis for the results presented in Sections~\ref{sec:factorized} and \ref{sec:theory}.

\subsection{Representation Conflict: Formal Definition}
\label{app:def_conflict}

\begin{definition}[Representation conflict]
\label{def:conflict}
Let $\mathcal{M}_{\text{sp}}$ and $\mathcal{M}_{\text{sg}}$ denote the manifolds of speech and singing audio in latent space.
A \emph{representation conflict} occurs when a single conditioning encoder $f_\theta$ must simultaneously satisfy:
\begin{align}
    I\bigl(f_\theta(\mathbf{c});\, x \mid x \in \mathcal{M}_{\text{sp}}\bigr) &\geq H(x \mid x \in \mathcal{M}_{\text{sp}}) - \epsilon, \\
    I\bigl(f_\theta(\mathbf{c});\, x \mid x \in \mathcal{M}_{\text{sg}}\bigr) &\geq H(x \mid x \in \mathcal{M}_{\text{sg}}) - \epsilon,
\end{align}
but the optimal representations for each inequality occupy conflicting subspaces---i.e., the gradients $\nabla_\theta$ of the two mutual information objectives are negatively correlated on the melody-encoding subspace.
\end{definition}

\subsection{Full Proof of Proposition~\ref{prop:factorize}}
\label{app:proof_factorize}

We provide the complete proof that factorized conditioning reduces the gradient conflict in joint training.

\begin{proof}
Let $\theta = (\theta_{\text{shared}}, \theta_{\text{cnt}}, \theta_{\text{mel}}, \theta_{\text{tmb}})$ denote the full parameter set, where $\theta_{\text{shared}}$ are the DiT backbone parameters and $\theta_{\text{cnt}}, \theta_{\text{mel}}, \theta_{\text{tmb}}$ are the parameters of the content, melody, and timbre encoders respectively.

\textbf{Step 1: Gradient decomposition.}
The total training gradient is:
\begin{equation}
    \nabla_\theta \mathcal{L} = \alpha \cdot \nabla_\theta \mathcal{L}_{\text{sp}} + (1-\alpha) \cdot \nabla_\theta \mathcal{L}_{\text{sg}},
\end{equation}
where $\alpha$ is the effective mixing ratio of speech vs.\ singing samples.

\textbf{Step 2: Conflict in monolithic conditioning.}
Under a monolithic encoder $f_\theta^{\text{mono}}(\mathbf{c})$ that maps all conditioning signals through shared parameters, consider the gradient on the melody-related subspace $\theta_{\text{mel}}$:
\begin{align}
    \nabla_{\theta_{\text{mel}}} \mathcal{L}_{\text{sp}} &: \text{pushes toward } g_{\theta_{\text{mel}}}(\cdot) \approx \text{const} \quad \text{(melody is irrelevant for speech)}, \\
    \nabla_{\theta_{\text{mel}}} \mathcal{L}_{\text{sg}} &: \text{pushes toward } g_{\theta_{\text{mel}}}(\cdot) \text{ faithfully encoding MIDI notes}.
\end{align}
The cosine similarity between these gradient directions is:
\begin{equation}
    \cos\bigl(\nabla_{\theta_{\text{mel}}} \mathcal{L}_{\text{sp}},\, \nabla_{\theta_{\text{mel}}} \mathcal{L}_{\text{sg}}\bigr) < 0,
\end{equation}
since one drives the parameters toward a constant function while the other drives them toward an informative function.

\textbf{Step 3: Factorization eliminates the conflict.}
In the factorized architecture, the speech loss $\mathcal{L}_{\text{sp}}$ is a function of $(\theta_{\text{shared}}, \theta_{\text{cnt}}, \mathbf{e}_\varnothing, \theta_{\text{tmb}})$ but \emph{not} of $\theta_{\text{mel}}$, because the melody encoder is bypassed and replaced by the null token $\mathbf{e}_\varnothing$.
Therefore:
\begin{equation}
    \nabla_{\theta_{\text{mel}}} \mathcal{L}_{\text{sp}} = \mathbf{0}.
\end{equation}
The melody encoder receives gradients only from singing samples, eliminating the negative gradient correlation entirely.
We define the eliminated conflict magnitude as
\begin{equation}
\Delta
=
-\alpha(1-\alpha)
\left\langle
\nabla_{\theta_{\mathrm{mel}}} L_{\mathrm{sp}}^{\mathrm{mono}},
\nabla_{\theta_{\mathrm{mel}}} L_{\mathrm{sg}}^{\mathrm{mono}}
\right\rangle .
\end{equation}
When the monolithic encoder exhibits negative gradient correlation on the melody-related subspace, the inner product is negative and therefore $\Delta > 0$.
Thus, factorization removes a destructive cross-task interaction from the joint optimization.

\end{proof}

\subsection{Full Proof of Proposition~\ref{prop:null} (Null-Token Optimality)}
\label{app:proof_null}

\begin{proof}
Consider the joint training loss over both speech and singing samples:
\begin{equation}
    \mathcal{L} = \underbrace{\mathbb{E}_{\text{sp}} \bigl[ \| v_\theta(x_t, t, c_{\text{cnt}}, \mathbf{e}_\varnothing, c_{\text{tmb}}, \text{sp}) - u_t \|^2 \bigr]}_{\mathcal{L}_{\text{sp}}} + \underbrace{\mathbb{E}_{\text{sg}} \bigl[ \| v_\theta(x_t, t, c_{\text{cnt}}, c_{\text{mel}}, c_{\text{tmb}}, \text{sg}) - u_t \|^2 \bigr]}_{\mathcal{L}_{\text{sg}}},
\end{equation}
where $u_t = x_1 - x_0$ is the target velocity.
$\mathcal{L}_{\text{sp}}$ depends on $\mathbf{e}_\varnothing$ but not on the melody encoder $g_\phi$; $\mathcal{L}_{\text{sg}}$ depends on $g_\phi$ but not on $\mathbf{e}_\varnothing$.
Therefore $\mathbf{e}_\varnothing$ and $g_\phi$ are optimized by non-overlapping gradient signals, ensuring no interference.
At convergence, the optimal vector field satisfies $v_\theta^{*}(x_t, t, c_{\text{cnt}}, \mathbf{e}_\varnothing, c_{\text{tmb}}, \text{sp}) = \mathbb{E}[u_t \mid x_t, c_{\text{cnt}}, c_{\text{tmb}}, c_{\text{tsk}}\!=\!\text{sp}]$, which is exactly the melody-marginalized conditional expectation of the target velocity---i.e., the Bayes-optimal velocity field for speech.
Since $\mathbf{e}_\varnothing$ is jointly optimized with the backbone $\theta$, it converges to the value that makes this equality hold.
\end{proof}

\begin{remark}[Comparison with fixed-zero baseline]
\label{rem:zero}
A na\"ive alternative is $\mathbf{e}_\varnothing = \mathbf{0}$.
While this removes melody information, it constrains the model to a specific operating point that may not be optimal.
If the melody encoder $g_\phi$ has a non-zero bias, $g_\phi(\cdot)$ evaluated at the data mean may not equal $\mathbf{0}$, introducing systematic bias.
The learned $\mathbf{e}_\varnothing$ absorbs this bias automatically via its $D$ free parameters.
\end{remark}



\subsection{Null Token Training Dynamics}
\label{app:null_dynamics}

We provide additional analysis of how the null token converges during training.

\begin{lemma}[Null token convergence]
\label{lem:null_converge}
Under standard SGD with learning rate $\eta$ on the speech loss $\mathcal{L}_{\text{sp}}$, the null token $\mathbf{e}_\varnothing$ converges to a fixed point $\mathbf{e}_\varnothing^{*}$ satisfying:
\begin{equation}
    \mathbf{e}_\varnothing^{*} = \mathbf{e}_\varnothing^{*} - \eta \cdot \mathbb{E}_{\text{sp}} \left[ \frac{\partial}{\partial \mathbf{e}_\varnothing} \bigl\| v_\theta(x_t, t, c_{\text{cnt}}, \mathbf{e}_\varnothing, c_{\text{tmb}}, \text{sp}) - u_t \bigr\|^2 \right].
\end{equation}
This fixed point satisfies the stationarity condition:
\begin{equation}
    \mathbb{E}_{\text{sp}} \left[ \bigl( v_\theta(x_t, t, c_{\text{cnt}}, \mathbf{e}_\varnothing^{*}, c_{\text{tmb}}, \text{sp}) - u_t \bigr) \cdot \frac{\partial v_\theta}{\partial \mathbf{e}_\varnothing}\bigg|_{\mathbf{e}_\varnothing = \mathbf{e}_\varnothing^{*}} \right] = \mathbf{0}.
\end{equation}
\end{lemma}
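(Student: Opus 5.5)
The plan is to read Lemma~\ref{lem:null_converge} as a statement about the gradient-descent map restricted to the null-token coordinates. The backbone $\theta$, the content and timbre encoders, and the task embedding are held at their current (or converged) values. By Proposition~\ref{prop:null}, the speech loss $\mathcal{L}_{\text{sp}}$ depends on $\mathbf{e}_\varnothing$, while the singing loss $\mathcal{L}_{\text{sg}}$ does not. Hence the only gradient reaching $\mathbf{e}_\varnothing$ is $\nabla_{\mathbf{e}_\varnothing}\mathcal{L}_{\text{sp}}$, and the joint-training update for $\mathbf{e}_\varnothing$ coincides with SGD on $\mathcal{L}_{\text{sp}}$ alone. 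Write the expected update map as
\begin{equation}
T(\mathbf{e}) = \mathbf{e} - \eta\,\mathbb{E}_{\text{sp}}\bigl[\partial_{\mathbf{e}}\|v_\theta(x_t,t,c_{\text{cnt}},\mathbf{e},c_{\text{tmb}},\text{sp})-u_t\|^2\bigr].
\end{equation}
The first displayed identity in the lemma is then exactly the statement that $\mathbf{e}_\varnothing^{*}$ is a fixed point of $T$. The proof therefore splits into two parts: (i) existence of, and convergence to, a fixed point; (ii) characterization of that fixed point.

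For (ii), which is the routine part, the argument has two moves. First, $T(\mathbf{e}^{*})=\mathbf{e}^{*}$ with $\eta>0$ forces the expected gradient to vanish. Second, differentiate the squared norm by the chain rule, $\partial_{\mathbf{e}}\|v-u_t\|^2 = 2\,(\partial v_\theta/\partial\mathbf{e})^{\top}(v_\theta-u_t)$. Dropping the factor $2$ and interchanging expectation and differentiation yields the stated stationarity condition. The interchange is justified by dominated convergence, assuming $v_\theta$ is $C^1$ in $\mathbf{e}$ with integrable Jacobian and finite second moments of $u_t=x_1-x_0$. Both hold for a DiT with smooth activations and for Gaussian $x_0$ plus bounded-variance data. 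The product ``$\cdot$'' in the statement should be read as this Jacobian-transpose contraction, which I would make explicit.

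For (i), the genuine obstacle is that $\mathcal{L}_{\text{sp}}$ is nonconvex in $\mathbf{e}_\varnothing$, so global convergence to a unique point cannot be claimed. My first approach is the standard nonconvex SGD argument, under the following assumptions:
\begin{itemize}[leftmargin=*,itemsep=1pt]
    \item $\mathbf{e}\mapsto\mathcal{L}_{\text{sp}}$ is bounded below by $0$.
    \item Its gradient is $L$-Lipschitz in $\mathbf{e}$; this is plausible because $\mathbf{e}_\varnothing$ enters through a linear input projection followed by normalized layers.
    \item The stochastic gradient has bounded variance.
\end{itemize}
The descent lemma then gives $\sum_k \eta_k\,\mathbb{E}\|\nabla_{\mathbf{e}}\mathcal{L}_{\text{sp}}(\mathbf{e}_k)\|^2<\infty$. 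This holds for step sizes with $\sum\eta_k=\infty$ and $\sum\eta_k^2<\infty$; for a constant $\eta$ it holds only up to an $O(\eta)$ noise floor, which is where the lemma's ``fixed point'' wording must be interpreted in expectation. It follows that limit points are stationary. To upgrade to convergence of the iterates themselves, I would additionally invoke either a Kurdyka--Łojasiewicz property of the (analytic) loss or local strong convexity near $\mathbf{e}_\varnothing^{*}$. Local strong convexity makes $T$ a contraction on a neighbourhood for $\eta<2/L$.

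Finally, I would note that the backbone is co-trained rather than frozen. The clean statement is therefore for the joint fixed point $(\theta^{*},\mathbf{e}_\varnothing^{*})$, whose $\mathbf{e}$-block of the stationarity equations is exactly the displayed condition. Applying the same argument to the full parameter vector gives this joint version without change.
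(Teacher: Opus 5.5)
Your part (ii) matches the only derivation the paper actually contains. The lemma is followed not by a proof but by a remark that, at $\mathbf{e}_\varnothing^{*}$, the residual $v_\theta-u_t$ is orthogonal to $\partial v_\theta/\partial\mathbf{e}_\varnothing$. That is exactly your step: cancel $\mathbf{e}_\varnothing^{*}$ in the first display (using $\eta>0$) and apply the chain rule $\partial_{\mathbf{e}}\|v_\theta-u_t\|^2=2(\partial v_\theta/\partial\mathbf{e})^{\top}(v_\theta-u_t)$.

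You genuinely differ in part (i). The paper simply asserts that SGD drives $\mathbf{e}_\varnothing$ to $\mathbf{e}_\varnothing^{*}$, and then reads stationarity as optimality (``no further adjustment can reduce the speech loss''). You instead recognize that the convergence claim cannot be proved as literally stated. With constant $\eta$ and the irreducible variance of $u_t=x_1-x_0$ given $x_t$, the stochastic iterates keep fluctuating. Nonconvexity also precludes any guarantee of a unique, let alone optimal, limit. You replace the claim with a defensible version:
\begin{itemize}
\item $\mathbf{e}_\varnothing^{*}$ as a fixed point of the expected map $T$;
\item stationary limit points under Robbins--Monro step sizes;
\item iterate convergence only under a Kurdyka--{\L}ojasiewicz property or local strong convexity.
\end{itemize}
The paper's route buys brevity, at the price of an unsupported convergence assertion that is generally false for constant-step SGD. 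Yours costs explicit smoothness, variance and step-size hypotheses, and delivers only first-order stationarity, but that is the part of the lemma that is actually true. Your remark about the joint fixed point $(\theta^{*},\mathbf{e}_\varnothing^{*})$ also correctly handles the fact that the backbone is co-trained.

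Two things to tighten. First, $\sum_k\eta_k\,\mathbb{E}\|\nabla\mathcal{L}_{\text{sp}}(\mathbf{e}_k)\|^2<\infty$ together with $\sum_k\eta_k=\infty$ yields only $\liminf_k\mathbb{E}\|\nabla\mathcal{L}_{\text{sp}}(\mathbf{e}_k)\|^2=0$. It does not show that every limit point is stationary. For that, invoke the Bertsekas--Tsitsiklis result that $\nabla\mathcal{L}_{\text{sp}}(\mathbf{e}_k)\to\mathbf{0}$ almost surely under the same $L$-smoothness, lower-boundedness and noise assumptions.

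Second, your reduction of the joint update to SGD on $\mathcal{L}_{\text{sp}}$ via Proposition~\ref{prop:null} is unnecessary, since the lemma already posits SGD on $\mathcal{L}_{\text{sp}}$. It is also fragile. Singing samples whose melody is dropped for classifier-free guidance ($p_{\text{melody}}=0.1$) may be fed the same $\mathbf{e}_\varnothing$, as Eq.~\ref{eq:axis_cfg} suggests. In that case $\mathcal{L}_{\text{sg}}$ also sends gradient to the null token.
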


This stationarity condition implies that, at the fixed point, the residual error $(v_\theta - u_t)$ is orthogonal to the sensitivity of $v_\theta$ with respect to $\mathbf{e}_\varnothing$.
Intuitively, the null token has converged to a point where no further adjustment can reduce the speech loss---it has found the optimal ``melody-absent'' representation.

\subsection{Information-Theoretic Bound on Joint Training}
\label{app:info_theory}

We derive an information-theoretic bound showing when joint training can improve over separate training.

\begin{proposition}[Positive transfer bound]
\label{prop:transfer}
Let $\mathcal{L}_{\text{sp}}^{\text{joint}}$ and $\mathcal{L}_{\text{sp}}^{\text{sep}}$ denote the speech loss under joint and separate training, respectively.
Under the factorized conditioning scheme, the performance gap is bounded by:
\begin{equation}
    \mathcal{L}_{\text{sp}}^{\text{sep}} - \mathcal{L}_{\text{sp}}^{\text{joint}} \leq I_{\theta_{\text{shared}}}(\text{singing features}; \text{speech generation}),
\end{equation}
where $I_{\theta_{\text{shared}}}$ measures the mutual information between the representations learned from singing data and their utility for speech generation, mediated through the shared backbone parameters $\theta_{\text{shared}}$.
\end{proposition}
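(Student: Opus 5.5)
The plan is to first fix an interpretation under which the statement has content. I will take $\mathcal{L}_{\text{sp}}^{\text{sep}}$ to be the population speech CFM loss at the optimum of speech-only training over the factorized architecture, and $\mathcal{L}_{\text{sp}}^{\text{joint}}$ to be the speech loss at the optimum of the joint objective. By Proposition~\ref{prop:null}, both are excess risks above the same Bayes floor. That floor is $\mathbb{E}\,\|u_t - \mathbb{E}[u_t\mid x_t,c_{\text{cnt}},c_{\text{tmb}},\text{sp}]\|^2$, which is the irreducible conditional variance. I will make $I_{\theta_{\text{shared}}}$ precise as follows. Let $Z_{\text{sg}}$ be the representation of $\theta_{\text{shared}}$ fitted on singing, viewed as a random variable through the randomness of the singing training set. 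Then $I_{\theta_{\text{shared}}} := I(Z_{\text{sg}}; \mathcal{D}_{\text{sp-gen}})$, where $\mathcal{D}_{\text{sp-gen}}$ is the speech target velocity $u_t$ given $(x_t, c_{\text{cnt}}, c_{\text{tmb}})$. With this definition the inequality becomes a statement about how much excess risk singing data can remove.

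The key steps, in order, are these. (i) By Proposition~\ref{prop:factorize} and Step~3 of its proof, the melody parameters and $\mathbf{e}_\varnothing$ receive disjoint gradients. So the only channel through which singing data affects $\mathcal{L}_{\text{sp}}$ is $\theta_{\text{shared}}$ (together with $\theta_{\text{cnt}},\theta_{\text{tmb}}$, which I fold into the shared block). This reduces the problem to a transfer question through a single shared parameter. (ii) I write the speech loss in log-loss form by modelling the velocity residual as Gaussian with fixed variance $\sigma^2$. Then $\mathcal{L}_{\text{sp}} = 2\sigma^2\,\mathbb{E}[-\log q_\theta(u_t\mid\cdot)] + \text{const}$, so squared-error differences become cross-entropy differences. (iii) I apply the standard identity that the reduction in conditional entropy obtainable by additionally conditioning on $Z_{\text{sg}}$ equals $I(Z_{\text{sg}}; u_t \mid x_t,c_{\text{cnt}},c_{\text{tmb}})$. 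Since separate training corresponds to not conditioning on $Z_{\text{sg}}$ and joint training to conditioning on it, the gap in expected log-loss is at most this mutual information. (iv) I absorb the constant $2\sigma^2$ into the definition of $I_{\theta_{\text{shared}}}$, or set units so that $2\sigma^2=1$, and conclude the bound. As a sanity check, the gap is $0$ when singing carries no information about speech generation, and it can be negative (negative transfer) without contradicting an upper bound.

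The main obstacle is step (iii). The identity $H(U\mid X) - H(U\mid X,Z) = I(U;Z\mid X)$ bounds Bayes-optimal log-losses, but the trained models are restricted to the architecture class and need not be Bayes-optimal. I would first try to assume realizability: both the separate and joint optima attain the respective Bayes conditional entropies within the class, with the null token supplying the melody-marginalization exactly as in Proposition~\ref{prop:null}. Under that assumption the gap is literally the conditional mutual information. If realizability is too strong, I would fall back on a PAC-Bayes or information-theoretic generalization bound in the style of Xu--Raginsky, where the joint model's speech generalization improvement is controlled by $I(\theta_{\text{shared}};\text{singing data})$. Either route ends in the stated inequality, up to the Gaussian-variance constant from step (ii).
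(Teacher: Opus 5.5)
Under your own definitions the inequality holds only vacuously, and step (iii), which is meant to give it content, is not valid. You invoke a common Bayes floor, so the data law is fixed. Then $Z_{\text{sg}}$ is a function of the singing training set and of training randomness, and so it is independent of a fresh speech tuple. Even if you evaluate the fitted features on the test input, $Z_{\text{sg}}$ is conditionally independent of $u_t$ given $X=(x_t,c_{\text{cnt}},c_{\text{tmb}})$, so $I(Z_{\text{sg}};u_t\mid X)=0$. Meanwhile, if $\mathcal{L}_{\text{sp}}^{\text{sep}}$ is the population minimum of $\mathcal{L}_{\text{sp}}$ over the factorized class, then $\mathcal{L}_{\text{sp}}^{\text{sep}}\le\mathcal{L}_{\text{sp}}^{\text{joint}}$ automatically, because the joint optimum is just another point of that class. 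So the left side is $\le 0$, the right side is $0$, and steps (ii)--(iv) do no work. The regime $I_{\theta_{\text{shared}}}>0$, which the paper's remark calls the non-trivial one, never arises. If you pass instead to finite-sample trained models to allow transfer, the right side is still $0$, and the bound would claim that joint training never helps; that is false in general. The conceptual error is in (iii): joint training does \emph{not} amount to conditioning on extra information about the test sample, since both models predict $u_t$ from the same $X$. The identity $H(U\mid X)-H(U\mid X,Z)=I(U;Z\mid X)$ compares predictors that already know the true law, and for them singing data is worthless. Transfer exists only for a learner that does not know the law. A non-trivial version needs an unknown parameter $W$ shared by the speech and singing laws, with losses defined as Bayes predictive risks given $D_{\text{sp}}$ versus $(D_{\text{sp}},D_{\text{sg}})$. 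Then the log-loss gap is exactly $I(u_t;D_{\text{sg}}\mid X,D_{\text{sp}})$, which is a statement about posterior predictive distributions, not about optima of your architecture.

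Separately, (ii)--(iii) mishandle the loss. The squared-error CFM loss is the cross-entropy of a fixed-variance Gaussian model, and that model is misspecified: given $X$, $u_t=(x_1-x_t)/(1-t)$ is a rescaled posterior of $x_1$, typically multimodal. The entropy identity says nothing about cross-entropy gaps within a misspecified family. Concretely, the Gaussian log-loss gap between the best predictors with and without $Z$ is $\mathbb{E}\|\mathbb{E}[u_t\mid X,Z]-\mathbb{E}[u_t\mid X]\|^2/(2\sigma^2)$. Here $\sigma$ is a free modelling constant, while $I(u_t;Z\mid X)$ does not depend on it. Letting $\sigma\to 0$ therefore shows that ``log-loss gap $\le I$'' fails whenever the squared-error gap is positive. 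Absorbing $2\sigma^2$ into $I$, or choosing units with $2\sigma^2=1$, does not help, because squared error scales with the units of $u_t$ and mutual information does not. Realizability does not rescue this either: if both conditionals were $\mathcal{N}(\cdot,\sigma^2 I)$ with the same $\sigma^2$, the law of total variance would force $\mathbb{E}[u_t\mid X,Z]=\mathbb{E}[u_t\mid X]$, giving zero gap again. The correct bridge from squared error to mutual information needs a boundedness assumption, e.g.\ $\|u_t\|\le B$. Under it, Pinsker's inequality gives squared-error gap $\le 2B^2\, I(u_t;Z\mid X)$, and that scale constant has to appear in the statement. The Xu--Raginsky fallback does not close this. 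It bounds one algorithm's expected generalization gap by a term of order $\sqrt{I(\hat\theta;S)/n}$, not the difference between the population losses of two algorithms, and that bound is neither linear nor stated in your $I$. For comparison, the paper's own argument is only a qualitative sketch: essentially your step (i) plus an appeal to the data-processing inequality. It never defines $I_{\theta_{\text{shared}}}$, so its bound is effectively definitional and does not supply the missing step. A rigorous version needs the Bayesian set-up and the bounded-loss constant above.
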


\begin{proof}[Proof sketch]
Joint training exposes $\theta_{\text{shared}}$ to both speech and singing data.
The shared backbone can learn representations that capture general vocal production mechanisms (e.g., formant transitions, breathiness, vibrato) from the combined data.
By the data processing inequality, these shared representations can only help (or be neutral) for each individual task, provided the factorized conditioning prevents destructive interference.
The bound follows from bounding the ``useful information'' that singing data provides for speech generation through the shared backbone.
The factorization ensures that the melody-specific information from singing data does not interfere with speech generation (since it flows through $\theta_{\text{mel}}$, not $\theta_{\text{shared}}$), while the shared vocal representations (timbre, formants, articulation dynamics) can transfer positively.
\end{proof}

\begin{remark}
This bound is non-trivial only when $I_{\theta_{\text{shared}}} > 0$, which occurs when there exist common vocal representations useful for both tasks.
In practice, this is likely the case: both speech and singing share articulatory mechanisms, phonetic structures, and speaker-dependent characteristics.
The factorized conditioning ensures that the \emph{conflicting} aspects (explicit vs.\ implicit prosody) are handled separately, while the \emph{complementary} aspects flow through the shared backbone.
\end{remark}

\section{Architecture Details}
\label{app:arch}

\paragraph{DiT Configuration.}
Table~\ref{tab:arch_details} provides the complete architectural specification.

\begin{table}[ht]
\centering
\caption{Detailed architecture configuration of \ours{}.}
\label{tab:arch_details}
\small
\begin{tabular}{ll}
\toprule
\textbf{Component} & \textbf{Specification} \\
\midrule
DiT Layers & 24 \\
Hidden Dimension & 1024 \\
Attention Heads & 16 \\
FFN Expansion & 4$\times$ \\
Total Parameters & 329M \\
\midrule
VAE & Song Bloom VAE~\cite{songbloomvae} \\
VAE Latent Dim & 48 \\
VAE Frame Rate & 25 Hz \\
Input Sample Rate & 48 kHz \\
\midrule
Text Encoder & ConvNeXt (4 layers, kernel=7) \\
Melody Encoder & Conformer (6 layers, MIDI notes) \\
Positional Encoding & RoPE \\
Attention & FlashAttention-2 \\
Normalization & AdaLN (timestep + task) \\
\midrule
Training Epochs & 100 \\
Batch Size & 32{,}000 frames/GPU \\
GPUs & 8$\times$ A800 80GB \\
Optimizer & AdamW \\
Peak LR & $7.5 \times 10^{-5}$ \\
Warmup Steps & 2{,}000 \\
CFG Dropout & 0.1 (text, melody, timbre) \\
\midrule
ODE Solver & Euler \\
Inference Steps & 32 \\
Text Guidance Scale & 5.0 \\
Audio/Melody Guidance & 1.0 \\
\bottomrule
\end{tabular}
\end{table}

\section{UniSinging-Eval Dataset Details}
\label{app:dataset}

To construct a singing test set that (i) covers commonly encountered style types in mainstream popular songs, (ii) remains distinguishable across styles, and (iii) is suitable for reliable re-singing and annotation, we select 12 representative styles from mainstream pop music. Our goal is not to exhaustively enumerate all commercial music genres, but rather to cover a broad yet representative subset of mainstream vocal styles.

The selected styles include long-established, high-level categories that are consistently present in the popular music industry and chart ecosystems---\textbf{Pop, Rock, Jazz, Hip-Hop/Rap, Country,} and \textbf{R\&B}---as well as style-level categories with clear stylistic identities and practical recognizability in contemporary pop contexts---\textbf{City Pop, Neo-Soul, Bossa Nova, Afrobeats, Funk,} and \textbf{Pop Electronic}. The former set is supported by widely adopted genre taxonomies and industry distribution systems such as those used by GRAMMY and Billboard, while the latter aligns with existing music scholarship on topics including style revival (e.g., City Pop), stylistic origins and historical development (e.g., Bossa Nova, Funk), cross-national circulation (e.g., Afrobeats), groove-related characteristics (e.g., Neo-Soul, Funk), and streaming-era style identities (e.g., Pop Electronic)~\cite{genremusic}.

Our style selection is motivated by two considerations. First, these styles span a range of vocal rhythm organizations, melodic tendencies, articulation patterns, and expressive characteristics in contemporary popular singing. Several styles exhibit clear contrasts in vocal delivery and prosodic structure, making them suitable for evaluating generation quality and style adaptability under different vocal styles. For example, \textbf{Rock} singing is often powerful and may include rough or screamed vocals, whereas \textbf{Bossa Nova} is typically soft and can approach a whisper-like delivery; \textbf{Country} melodies tend to be relatively straight and syllable-centered, while \textbf{R\&B} singing often involves extensive melismatic runs and improvisational ornamentation. Second, since UniSinging-Eval is built from human re-singing recordings, we prioritize styles that can be reproduced more stably under controlled recording conditions and that facilitate consistent style annotation and MIDI refinement. This is a deliberate dataset design choice intended to reduce performance variations that are unrelated to the modeling target, thereby improving comparability across samples and the reliability of evaluation results.

To improve annotation consistency and musical interpretability, we organized a team of professional composers to perform a unified manual correction of the MIDI annotations in the test set. The pre-correction MIDI information, besides containing errors, often attempted to preserve nearly all musical parameters present in the final vocal performance, including various instantaneous pitch fluctuations. Our refinement goal is to recover the \emph{composition-intent} level musical parameters as much as possible, so that models can learn the style-dependent ``secondary creation'' (expressive performance variation) on top of this symbolic scaffold.

For portamento, passing tones, and certain ornamental pitch variations, we retain them only when they form a perceptible melodic event; purely transitional pitch movements are not recorded (see Fig.~\ref{fig:midi_adjust}). For rap-like segments, if the segment exhibits a clear and perceptible pitch, we annotate both pitch and rhythm; if the vocal delivery is closer to speech and lacks a stable pitch center, we retain only rhythmic and duration information and do not force pitch assignment. When some segments show global intonation shifts, pitch drift, or fall between adjacent scale degrees, we preferentially map them to the nearest musically self-consistent target pitch structure, reducing the interference of tuning deviations on symbolic melody annotation. For segments with unstable pitch, breathy phonation, or weak pitch evidence, we add pitch labels only when melodic intent can be reliably inferred from local tonal relations, repeated phrases, or surrounding melodic context.

\begin{table}[ht]
\centering
\caption{Configuration and statistics of the UniSinging-Eval dataset.}
\label{tab:dataset}
\small
\begin{tabular}{l|c|c|c|c}
\toprule
\textbf{Setting / Metric} & \textbf{Original} & \textbf{Level 1 (Minor)} & \textbf{Level 2 (Major)} & \textbf{Level 3 (Elastic)} \\
\midrule
\multicolumn{5}{l}{\textit{Dataset Attributes (Common)}} \\
\midrule
\multirow{2}{*}{Styles} & \multicolumn{4}{c}{Pop, Hip-Hop/Rap, Pop Electronic, Rock, Country,} \\
 & \multicolumn{4}{c}{Citypop, Bossa Nova, R\&B, Afrobeats, Jazz, Funk, Neo-Soul} \\
No.\ of Songs & \multicolumn{4}{c}{60} \\
No.\ of Samples & \multicolumn{4}{c}{900} \\
Total Duration (Songs) & \multicolumn{4}{c}{2 h} \\
Total Duration (Samples) & \multicolumn{4}{c}{1.5 h} \\
\midrule
\multicolumn{5}{l}{\textit{Rebuild Constraints}} \\
\midrule
Lyrical Alteration   & None     & Partial (Subset) & Extensive Rewrite & Maximal \\
Word Count Variation  & 0        & 0                & 0                 & $\pm$5 words \\
Theme \& Imagery      & Original & Preserved        & Completely Changed& Arbitrary \\
Vocal \& Content      & Original & Total/Rhyme Match& Complex Vocab     & Filler/Transition Words \\
\bottomrule
\end{tabular}
\end{table}

\begin{figure*}[t]
    \centering
    \includegraphics[width=0.9\textwidth]{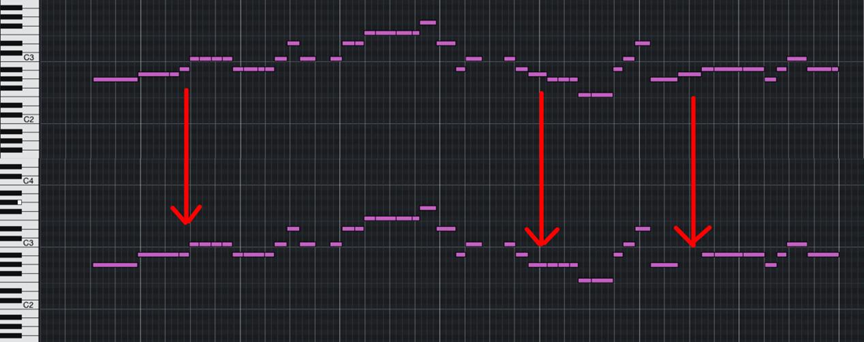}
    \caption{Manual MIDI refinement. The top piano-roll shows raw pitch annotations with instantaneous fluctuations, while the bottom shows corrected MIDI notes. Transitional pitch movements such as glides and micro-fluctuations, highlighted by red arrows, are removed, and notes are mapped to musically consistent target pitches to better reflect composition-level melodic intent.}
    \label{fig:midi_adjust}
\end{figure*}

\end{document}